\newtheorem{theorem}{Theorem}[section]
\newtheorem{lemma}{Lemma}[section]
\newtheorem{corollary}{Corollary}[section]
\theoremstyle{definition}
\DeclareMathOperator{\FDR}{FDR}
\DeclareMathOperator{\FDP}{FDP}
\title{False Discovery Rate Adjustments for Average Significance Level
  Controlling Tests}
\author{Timothy B. Armstrong\thanks{email: timothy.armstrong@usc.edu.  Support
    from National Science Foundation Grant SES-2049765 is gratefully
    acknowledged.}  \\
    University of Southern California}
\date{\today}
\begin{document}

\maketitle

\begin{abstract}
  Multiple testing adjustments, such as the \citet{benjamini_controlling_1995}
  step-up procedure for controlling the false discovery rate (FDR), are
  typically applied to families of tests that control significance level in the
  classical sense: for each individual test, the probability of false rejection
  is no greater than the nominal level.
  In this paper, we consider tests that satisfy only a weaker notion of
  significance level control, in which the probability of false rejection need
  only be controlled on average over the hypotheses.
  We find that the \citet{benjamini_controlling_1995} step-up procedure
  still controls FDR in the asymptotic regime with many weakly dependent
  $p$-values and an increasing number of rejections, and that certain
  adjustments for dependent
  $p$-values such as the \citet{benjamini_control_2001} procedure continue to
  yield FDR control in finite samples.
  Our results open the door to FDR controlling procedures in
  nonparametric and high dimensional settings where weakening the notion of
  inference
  may allow for power improvements.
\end{abstract}

\section{Introduction}

Consider testing $m$ hypotheses $H_1,\ldots,H_m$.  Let
$\mathcal{H}_0\subseteq\{1,\ldots,m\}$ denote the set of true null hypotheses.
Given $p$-values $p_1,\ldots,p_m$ for each of the hypotheses,
we wish to form a multiple testing procedure which decides on a subset of
hypotheses to reject.
A common starting point for multiple testing procedures proposed in the
literature is to assume that the $p$-values are formed from tests that control
significance level in the classical sense, which implies
\begin{align}
  \label{eq:level_alpha}
  \text{for all }t\in [0,1]\text{ and }i\in\mathcal{H}_0,\, P(p_i\le t) \le t.
\end{align}
One then adjusts the critical value so that some notion of multiple testing
error, such as the false discovery rate (FDR), is controlled (see formal
definitions below).

In this paper, we explore the possibility of forming FDR controlling multiple
testing procedures from tests that satisfy a weaker
\emph{average significance level} control criterion:
\begin{align}
  \label{eq:average_level_alpha}
  \text{for all }t\in [0,1], \,\frac{1}{m}\sum_{i\in \mathcal{H}_0} P(p_i\le t) \le t.
\end{align}
Such tests can be formed from confidence intervals (CIs) that weaken the
classical definition of a CI by requiring coverage only on average over the
reported CIs.  Letting $CI_1(t),\ldots,CI_m(t)$ be CIs for parameters $\theta_1,\ldots,\theta_m$ with nominal $100\cdot
(1-t)\%$ coverage, the \emph{average coverage} criterion requires
\begin{align}
  \label{eq:average_coverage}
  \frac{1}{m}\sum_{i=1}^m P(\theta_i\notin CI_i(t))\le t.
\end{align}
Given null hypotheses $H_i: \theta_i=\theta_{0,i}$,
$p$-values formed from these CIs will, by definition, satisfy $p_i\le t$ iff.
$\theta_{0,i}\notin CI_i(t)$.  %
If the CIs satisfy (\ref{eq:average_coverage}) for each $t\in[0,1]$, the
resulting $p$-values will
satisfy (\ref{eq:average_level_alpha}) since
$\frac{1}{m}\sum_{i\in \mathcal{H}_0} P(\theta_{0,i}\notin CI_i(t))
= \frac{1}{m}\sum_{i\in \mathcal{H}_0} P(\theta_i\notin CI_i(t))
\le \frac{1}{m}\sum_{i=1}^m P(\theta_i\notin CI_i(t))\le t$.

CIs satisfying the average coverage criterion (\ref{eq:average_coverage}) and
related criteria have been developed in a number of settings (\citealp{wahba_bayesian_1983}; \citealp{nychka_bayesian_1988};
\citealp[][Chapter 5.8]{wasserman_all_2007}; \citealp{cai_adaptive_2014};
\citealp{armstrong_robust_2022-1}).  They are particularly
appealing in high dimensional or nonparametric settings involving
regularized estimation, where impossibility results
\citep{low_nonparametric_1997} severely restrict the scope for constructing
classical tests and CIs.
Additional settings where the average significance level condition
(\ref{eq:average_level_alpha}) can be shown to hold have been
considered in recent work by
\citet{ignatiadis_empirical_2025}, \citet{ignatiadis_asymptotic_2025} and
\citet{barber_false_2025}.

We ask whether $p$-values satisfying the weaker condition
(\ref{eq:average_level_alpha}) can be used as an input to multiple
testing procedures used in the literature.
We focus on multiple testing procedures designed to control the false discovery
rate (FDR) of \citet{benjamini_controlling_1995}.
We find that average significance level control is indeed sufficient for certain
multiple testing procedures to guarantee FDR control.
In particular, average significance level control is sufficient to guarantee FDR control of the
\citet{benjamini_controlling_1995} procedure in the asymptotic regime of weakly
dependent $p$-values and many
hypotheses ($m\to\infty$) and of the \citet{benjamini_control_2001} procedure
with fixed $m$ and arbitrary dependence among $p$-values.
On the other hand, in contrast to the classical setting, we show by example that
the \citet{benjamini_controlling_1995} procedure does not in general have FDR control with
fixed $m$ and independent $p$-values,
and that
approaches that estimate the proportion of null hypotheses, such as
the procedure of \citet{storey_direct_2002}, can fail to control FDR even as
$m\to\infty$.

Much of the literature on FDR controlling multiple testing procedures takes a
family of $p$-values satisfying the
classical significance level control condition (\ref{eq:level_alpha}) as a
starting point.  %
An important exception is the literature on knockoff based FDR controlling
procedures \citep{barber_controlling_2015}, which instead rely on the construction of
auxiliary random variables, called knockoffs.  %
Constructing knockoffs typically requires modeling
assumptions such as the ``model-$X$'' framework, in which the joint distribution
of regression covariates is known or estimated with sufficient
accuracy \citep{candes_panning_2018}, or restricting the procedure to low
dimensional settings; see also \citet{arias-castro_distribution-free_2017} for
an application of this approach under the assumption of a symmetric null
distribution.

More recently, \citet{wang_false_2022} have shown that $e$-values (random
variables $e_i$ satisfying $E[e_i]\le 1$ for $i\in\mathcal{H}_0$) can be used as
an input to FDR controlling procedures, thereby providing another approach to
controlling FDR without the use of classical significance level controlling
tests.
A notion of average error control for $e$-values similar to the one used for
$p$-values in the present paper has arisen independently in this literature; see
\citet{ren_derandomised_2024}, \citet{li_note_2025} and
\citet{ignatiadis_asymptotic_2025}.
Interestingly, \citet{ren_derandomised_2024} use this idea to draw a connection
between $e$-values and the knockoff literature cited above.

While we are not aware of previous results applying the average significance
control criterion (\ref{eq:average_level_alpha}) to FDR control,
the idea of requiring coverage or size control only on average is suggested by
empirical Bayes interpretations of the FDR \citep[e.g.][]{storey_direct_2002}
and anticipated in some discussions in this literature
\citep[e.g.][]{efron_size_2007}.
Subsequent to the first draft of this paper, further results and applications
involving FDR control using the average significance level control criterion
(\ref{eq:average_level_alpha}) have been developed by
\citet{ignatiadis_empirical_2025}, \citet{ignatiadis_asymptotic_2025} and
\citet{barber_false_2025}.

The rest of this paper is organized as follows.
Section \ref{sec:overview} introduces the setup and provides an overview of results.
Section \ref{sec:finite_sample_results} presents finite sample results and their
proofs.
Section \ref{sec:asymptotic_fdr} presents results that are asymptotic in the
number $m$ of hypotheses being tested, while the proofs of the asymptotic
results are contained in the Supplementary Materials.

\section{Setup and Overview of Results}\label{sec:overview}

A multiple testing procedure is a function that maps the $p$-values
$p_1,\ldots,p_m$ to a subset
$\mathcal{R}=\mathcal{R}(p_1,\ldots,p_m)\subseteq\{1,\ldots,m\}$ of rejected
null hypotheses.
The false discovery proportion (FDP) of a procedure $\mathcal{R}$ is:
\begin{align}
  \label{eq:fdp}
  \FDP(\mathcal{R},\mathcal{H}_0)=\frac{\#\left( \mathcal{R}\cap\mathcal{H}_0 \right)}{\#\mathcal{R}\vee 1}
\end{align}
where $\#\mathcal{A}$ is the cardinality of $\mathcal{A}$ and $a\vee b$ denotes
the maximum of $a$ and $b$.
The false discovery rate (FDR) of this procedure is the expectation of the FDP:
\begin{align}
  \label{eq:fdr}
  \FDR(\mathcal{R},\mathcal{H}_0,P)
  =E_P\FDP(\mathcal{R},\mathcal{H}_0)
  =E_P\left[ \frac{\#\left( \mathcal{R}\cap\mathcal{H}_0 \right)}{\#\mathcal{R}\vee 1} \right]
\end{align}
where $E_P$ denotes expectation under the distribution $P$ of the $p$-values.
We say that $\mathcal{R}$ controls the false discovery rate at level $q$ if
$\FDR(\mathcal{R},\mathcal{H}_0,P)\le q$.

While some of our results are more general, our main focus is on
the \citet[][BH]{benjamini_controlling_1995} step-up procedure, and generalizations such as those considered by \citet{benjamini_control_2001},
\citet{storey_direct_2002} and \citet{blanchard_two_2008}.
To describe these procedures, let
\begin{align}\label{eq:fixed_t_R}
  \mathcal{R}^{\operatorname{fixed}}_t(p_1,\ldots, p_n) = \left\{ i: p_i\le t \right\}.
\end{align}
denote the fixed rejection region procedure with cutoff $t$.
That is, we reject all hypotheses with $p$-value less than $t$.
Let
\begin{align}
  \label{eq:VtStRt}
  &V(t)=\sum_{i\in\mathcal{H}_0} I(p_i\le t)
  =\#\left( \mathcal{R}^{\operatorname{fixed}}_t \cap \mathcal{H}_0 \right),
  \quad
  S(t)=\sum_{i\notin\mathcal{H}_0} I(p_i\le t)
  =\#\left( \mathcal{R}^{\operatorname{fixed}}_t \backslash \mathcal{H}_0 \right)  \nonumber  \\
  &\text{and }R(t)=V(t)+S(t)=\# \mathcal{R}^{\operatorname{fixed}}_t.
\end{align}
The FDP of $\mathcal{R}^{\operatorname{fixed}}_t$ is given by $V(t)/[R(t)\vee
1]$.
The BH procedure can be
motivated by noting that, while $V(t)$ cannot be observed, one can form a
conservative estimate by replacing it with $m\cdot t$.
This gives 
an estimate of the fixed rejection region FDR:
\begin{align}\label{eq:fdr_hat}
  \widehat{\FDR}(t) = \frac{m\cdot t}{\# \mathcal{R}^{\operatorname{fixed}}_t\vee 1}
  = \frac{m\cdot t}{R(t)\vee 1}.
\end{align}
The BH procedure at nominal FDR level $q$ uses a cutoff $\hat
t_{\operatorname{BH},q}$ based on this estimate:
\begin{align}
  \label{eq:bhq}
  \mathcal{R}_{\operatorname{BH},q}(p_1,\ldots,p_m)
  = \{i: p_i\le \hat t_{\operatorname{BH},q}\}
  \quad\text{where}\quad
  \hat t_{\operatorname{BH},q} = \max\{t : \widehat{\FDR}(t)\le q\}.
\end{align}
A more general class of step-up procedures can be formed by using an estimate of
the form $\pi m t$
for $V(t)$ and modifying the
denominator using a nondecreasing function $\beta$, called a shape function:
\begin{align}
  \label{eq:step-up}
  \mathcal{R}_{\pi,\beta(\cdot),q}(p_1,\ldots,p_m)
  = \{i: p_i\le \hat t_{\pi,\beta(\cdot),q}\}
  \text{ where }
  \hat t_{\pi,\beta(\cdot),q} = \max\left\{t : \frac{\pi m t}{\beta(R(t))}\le q\right\}.
\end{align}
Such procedures have been considered by, among others, \citet{benjamini_control_2001},
\citet{storey_direct_2002} and \citet{blanchard_two_2008}.

When the $p$-values satisfy the classical significance level control condition
(\ref{eq:level_alpha}), these procedures are known to have the following
properties.
\begin{itemize}
\item[(i)] The BH procedure controls FDR when $p$-values are
  independent \citep{benjamini_controlling_1995}.
\item[(ii)] The estimate $\widehat{\FDR}(t)$ is upwardly biased for the FDR of
  the fixed rejection region procedure $\mathcal{R}^{\operatorname{fixed}}_t$
  when $p$-values are independent \citep{storey_strong_2004,liang_adaptive_2012}.
\item[(iii)] The procedure $\mathcal{R}_{1,\beta(\cdot),q}$ (with $\pi=1$) controls FDR
  under arbitrary dependence for the shape function $\beta(k)=k\left( \sum_{i=1}^m i^{-1}
\right)^{-1}$ \citep{benjamini_control_2001} and, more generally, when
$\beta(k)=\int_0^k xd\nu(x)$ for an arbitrary probability distribution $\nu$ on
$(0,\infty)$ \citep{blanchard_two_2008}.
\item[(iv)] The BH procedure controls FDR asymptotically (as $m\to\infty$) when the
$p$-values satisfy a weak dependence condition
\citep{storey_strong_2004,genovese_stochastic_2004}.
\item[(v)] The procedure $\mathcal{R}_{\hat\pi, \beta(\cdot), q}$, where
  $\beta(t)=t$ and
$\hat \pi=(\sum_{i=1}^mI(p_i>\lambda)+1)/((1-\lambda) m)$ is an
estimate of $\#\mathcal{H}_0/m$, controls FDR (a) under fixed $m$ with independent
$p$ values
using a slight modification of the procedure
\citep{storey_strong_2004}\footnote{For results with fixed $m$,
  \citet{storey_strong_2004} consider a modification in which $\hat
  t_{\pi,\beta(\cdot),q}$ is replaced by $\lambda$ if $\hat t_{\pi,\beta(\cdot),q}>\lambda$.} and (b) asymptotically as $m\to\infty$
when the $p$-values satisfy a weak dependence condition
\citep{storey_strong_2004,genovese_stochastic_2004}.
\end{itemize}
Our results can be summarized as showing that, when the $p$-values only satisfy
the weaker average significance level control condition
(\ref{eq:average_level_alpha}), properties (ii), (iii) and (iv) continue to hold,
but that properties
(i) and (v)(a) and (v)(b) in general do not.
Section \ref{sec:finite_sample_fdr} shows property (iii) and provides a
counterexample to property (i).
Section \ref{sec:fdr_estimation} shows property (ii).
Section \ref{sec:asymptotic_fdr} shows property (iv).
A counterexample for property (v) is given in the Supplementary Materials.

\section{Finite Sample Results}\label{sec:finite_sample_results}

This section considers finite sample control of FDR for step-up procedures
(Section \ref{sec:finite_sample_fdr}) and point estimation of FDR of the fixed
rejection region procedure $\mathcal{R}^{\operatorname{fixed}}_t$ (Section \ref{sec:fdr_estimation}).

\subsection{FDR Control}\label{sec:finite_sample_fdr}

Our result on FDR control for
step-up procedures is a corollary of a more general result that uses an
invariance assumption on an oracle version of a multiple testing procedure.
The basic idea is that, if the $p$-values satisfy the average significance level
control condition (\ref{eq:average_level_alpha}), then one can form another
multiple testing problem in which the classical condition (\ref{eq:level_alpha})
holds by randomly permuting the $p$-values of the true null hypotheses and
multiplying them by $m/\#\mathcal{H}_0$.
One can then apply results from the literature to this new setting.
To state our result, we explicitly introduce notation
$\mathcal{R}(p_1,\ldots,p_m;\mathcal{H}_0)$
for
oracle procedures that depend on the set of true null
hypotheses $\mathcal{H}_0$ (typically through the cardinality $\#\mathcal{H}_0$
of this set).
We use
a permutation invariance
condition
\begin{align}
  \label{eq:permutation_invariance}
  i\in \mathcal{R}(p_1,\ldots,p_m) \quad\text{iff.}\quad \sigma(i)\in \mathcal{R}(p_{\sigma(1)},\ldots,p_{\sigma(m)})
\end{align}
for any permutation $\sigma$ of the indices $1,\ldots,m$ of the tests.
This includes the class of step-up procedures (\ref{eq:step-up}), so long as
$\pi$ is either a fixed number or a permutation invariant function of the
$p$-values.

\begin{theorem}\label{thm:finite_sample_fdr_control}
  Let $\mathcal{R}$ be a multiple testing procedure that satisfies the
  permutation invariance condition (\ref{eq:permutation_invariance}), and
  suppose that the oracle procedure
  $\widetilde{\mathcal{R}}(p_1,\ldots,p_m;\mathcal{H}_0)\allowbreak =\mathcal{R}(p_1(m_0/m),\ldots,p_m(m_0/m))$
  (where $m_0=\#\mathcal{H}_0$) controls FDR at level $q$ for any
  $(P,\mathcal{H}_0)$ satisfying the classical significance level control
  condition (\ref{eq:level_alpha}),
  regardless of the dependence structure of $p_1,\ldots,p_m$ under $P$.
  Then $\mathcal{R}$ controls FDR at level
  $q$ for any $(P,\mathcal{H}_0)$ such that the average significance level control
  condition (\ref{eq:average_level_alpha}) holds.
\end{theorem}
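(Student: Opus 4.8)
The plan is to manufacture, from the original $p$-values $p_1,\ldots,p_m$ satisfying average significance level control (\ref{eq:average_level_alpha}), a new \emph{randomized} family $r_1,\ldots,r_m$ that satisfies the classical condition (\ref{eq:level_alpha}), coupled so tightly to the original that the oracle applied to $(r_i)$ and $\mathcal{R}$ applied to $(p_i)$ reject the same number of nulls and the same total number of hypotheses. Concretely, I would let $\tau$ be a uniformly random permutation of the indices in $\mathcal{H}_0$ (extended to the identity on the complement), drawn independently of the $p$-values, and set $r_i = (m/m_0)\,p_{\tau(i)}$ with $m_0=\#\mathcal{H}_0$. The inflation factor $m/m_0$ is chosen precisely so that the $m_0/m$ scaling built into the oracle undoes it.

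First I would verify that $(r_i)$ satisfies (\ref{eq:level_alpha}). For $i\in\mathcal{H}_0$, uniformity of $\tau$ makes $\tau(i)$ uniform over $\mathcal{H}_0$, so $P(r_i\le t)=\frac{1}{m_0}\sum_{j\in\mathcal{H}_0}P(p_j\le (m_0/m)t)$. Applying (\ref{eq:average_level_alpha}) at threshold $(m_0/m)t$ gives $\sum_{j\in\mathcal{H}_0}P(p_j\le (m_0/m)t)\le m_0 t$, hence $P(r_i\le t)\le t$. Thus the distribution $P'$ of $(r_i)$, paired with the same null set $\mathcal{H}_0$, satisfies the classical significance level control condition, so the hypothesis of the theorem applies and the oracle controls FDR at level $q$ under $P'$.

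Next I would transfer this bound using permutation invariance. By construction $(m_0/m)r_i=p_{\tau(i)}$, so $\widetilde{\mathcal{R}}(r_1,\ldots,r_m;\mathcal{H}_0)=\mathcal{R}(p_{\tau(1)},\ldots,p_{\tau(m)})$, which by (\ref{eq:permutation_invariance}) equals $\tau(\mathcal{R}(p_1,\ldots,p_m))$. Because $\tau$ permutes $\mathcal{H}_0$ within itself, applying $\tau$ to a rejection set changes neither its cardinality nor the number of its elements lying in $\mathcal{H}_0$; hence $\FDP(\widetilde{\mathcal{R}}(r),\mathcal{H}_0)=\FDP(\mathcal{R}(p),\mathcal{H}_0)$ for every realization of $(p,\tau)$. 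Since the left-hand side does not depend on $\tau$, taking expectations yields $\FDR(\mathcal{R},\mathcal{H}_0,P)=\FDR(\widetilde{\mathcal{R}},\mathcal{H}_0,P')\le q$, as desired.

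The step I expect to require the most care is arranging the coupling so both demands hold at once: the randomization of $\tau$ is exactly what converts the \emph{average} null behaviour into a per-hypothesis bound for each $r_i$, while restricting $\tau$ to act only within $\mathcal{H}_0$ --- together with permutation invariance of $\mathcal{R}$ --- is what makes the FDP pointwise invariant under the transformation. Getting the scaling constant right, so that the oracle's built-in $m_0/m$ factor precisely cancels the $m/m_0$ inflation, is the remaining bookkeeping. A minor point to note is that $r_i$ may exceed one; since the hypothesis invokes only condition (\ref{eq:level_alpha}), which we have verified for all $t$, this causes no difficulty.
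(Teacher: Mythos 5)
Your proposal is correct and follows essentially the same argument as the paper: a uniformly random permutation of the null indices combined with the $m/m_0$ inflation converts average significance level control into the classical condition, and permutation invariance (with the permutation fixing $\mathcal{H}_0$ setwise) makes the FDP of the oracle applied to the transformed $p$-values coincide realization-by-realization with that of $\mathcal{R}$ applied to the original ones. Your added remark that the inflated values may exceed one, and why this is harmless, is a minor point of care the paper leaves implicit.
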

\begin{proof}
  Given $(P,\mathcal{H}_0)$ such that (\ref{eq:average_level_alpha}) holds and
  $p_1,\ldots,p_n$ drawn from $P$, define $\tilde p_i$ as follows.  Let $\sigma$
  be a permutation of $\mathcal{H}_0$, taken at random from the set of all
  permutations of $\mathcal{H}_0$ with equal probability, independently of $p_1,\ldots,p_m$.  Extend $\sigma$ to a
  permutation on $\{1,\ldots,m\}$ by taking $\sigma(i)=i$ for $i\notin
  \mathcal{H}_0$.  Let $\tilde p_i=(m/m_0)p_{\sigma(i)}$, where
  $m_0=\#\mathcal{H}_0$.  Then, for $i\in\mathcal{H}_0$ and $t\in[0,1]$,
  \begin{align*}
    P(\tilde p_i\le t) = \sum_{j\in\mathcal{H}_0} P(\sigma(i)=j)P(p_j(m/m_0)\le t|\sigma(i)=j)
    =\frac{1}{m_0}\sum_{j\in\mathcal{H}_0} P(p_j(m/m_0)\le t)
  \end{align*}
  where we use independence of $\sigma$ and $p_j$ and the fact that
  $P(\sigma(i)=j)=1/m_0$.  Since $p_1,\ldots,p_m$ satisfy
  (\ref{eq:average_level_alpha}) under $(P,\mathcal{H}_0)$, this is bounded by
  $(m/m_0)\cdot t m_0/m=t$.  Thus, letting $\tilde P$ denote the
  distribution of $\tilde p_1,\ldots,\tilde p_n$ under $P$, $(\tilde
  P,\mathcal{H}_0)$ satisfies the classical significance level
  control condition (\ref{eq:level_alpha}).  It follows by the assumptions of
  the theorem that the oracle procedure
  $\widetilde{\mathcal{R}}(\tilde p_1,\ldots,\tilde
  p_m;\mathcal{H}_0)=\mathcal{R}(\tilde p_1(m_0/m),\ldots,\tilde p_m(m_0/m))=\mathcal{R}(p_{\sigma(1)},\ldots,p_{\sigma(m)})$
  controls FDR at level $q$ under $\mathcal{H}_0$ when $p_1,\ldots,p_m$ are
  drawn according to $P$.  But by permutation invariance of $\mathcal{R}$ and
  the fact that $\sigma$ maps $\mathcal{H}_0$ to itself, we
  have $\#\left( \mathcal{R}(p_{\sigma(1)},\ldots,p_{\sigma(m)}) \cap
    \mathcal{H}_0 \right) = \#\left( \mathcal{R}(p_{1},\ldots,p_{m}) \cap
    \mathcal{H}_0 \right)$.
  Also, $\# \mathcal{R}(p_{\sigma(1)},\ldots,p_{\sigma(m)})=\#
  \mathcal{R}(p_{1},\ldots,p_{m})$ by permutation invariance.  Thus, the FDR of
  $\mathcal{R}(p_{m},\ldots,p_{m})$ is the same as the FDR of
  $\mathcal{R}(p_{\sigma(1)},\ldots,p_{\sigma(m)})$, and is therefore bounded by
  $q$.
\end{proof}

As a special case, applying Proposition 2.7 and Lemma 3.2(iii) in
\citet{blanchard_two_2008} gives the following.

\begin{corollary}\label{cor:by_br_fdr_control}
  The class of dependence controlling step-up procedures of
  \citet{blanchard_two_2008}, given by (\ref{eq:step-up}) with $\pi=1$ and
  $\beta(r)=\int_{0}^{r}xd\nu(x)$ for some probability measure $\nu$, controls
  FDR at level $q$ for any $(P,\mathcal{H}_0)$ such that the average
  significance level control condition (\ref{eq:average_level_alpha}) holds.
  In particular, 
  the step-up procedure of \citet{benjamini_control_2001}, which is given by
  (\ref{eq:step-up}) with $\pi=1$ and $\beta(r)=r/\left( \sum_{i=1}^m 1/i
  \right)$, controls FDR at level $q$ for any $(P,\mathcal{H}_0)$ such that the average
  significance level control condition (\ref{eq:average_level_alpha}) holds.
\end{corollary}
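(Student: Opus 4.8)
The plan is to verify the hypothesis of the preceding corollary (Corollary 2.1) for the shape function $\beta(r)=\int_0^r x\,d\nu(x)$ and then invoke it. That corollary reduces the present claim---FDR control under the average significance level control condition (\ref{eq:average_level_alpha}) for the conservative procedure with $\pi=1$---to the single statement that the \emph{oracle} step-up procedure (\ref{eq:step-up}) with $\pi=m_0/m$ and this $\beta$ controls FDR at level $q$ whenever the classical condition (\ref{eq:level_alpha}) holds. So the entire task is to establish this one classical-control statement, which is exactly the regime treated by \citet{blanchard_two_2008}.

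First I would rewrite the oracle procedure in its equivalent ordered-$p$-value step-up form in order to read off its critical values. Setting $\pi=m_0/m$ in (\ref{eq:step-up}), the threshold becomes $\hat t=\max\{t: m_0 t/\beta(R(t))\le q\}$, so at the point where $R(t)=k$ the effective critical value for the $k$-th smallest $p$-value is $c_k=q\beta(k)/m_0$. Writing $c_k=(qm/m_0)\,\beta(k)/m$, I recognize this as precisely the \citet{blanchard_two_2008} step-up procedure with shape function $\beta$ applied to all $m$ hypotheses at nominal level $\alpha:=qm/m_0$. In other words, scaling $\pi$ by $m_0/m$ is identical to rescaling the nominal level by $m/m_0$.

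Next I would apply the dependence-control results of \citet{blanchard_two_2008}: their Lemma 3.2(iii) certifies that any $\beta(r)=\int_0^r x\,d\nu(x)$ with $\nu$ a probability measure is an admissible shape function under arbitrary dependence, and their Proposition 2.7 then bounds the FDR of the corresponding step-up procedure at level $\alpha$ by $\pi_0\alpha$, where $\pi_0=m_0/m$. The crucial bookkeeping is that the two factors of $m_0/m$ cancel: the FDR is at most $\pi_0\alpha=(m_0/m)(qm/m_0)=q$, which is exactly the classical-control conclusion required by Corollary 2.1. Invoking that corollary then yields FDR control at level $q$ under (\ref{eq:average_level_alpha}) for the procedure with $\pi=1$. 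The \citet{benjamini_control_2001} special case follows by taking $\nu$ to be the discrete measure on $\{1,\dots,m\}$ with $\nu(\{k\})\propto 1/k$, for which $\int_0^r x\,d\nu(x)=r/\sum_{i=1}^m i^{-1}$.

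I expect the only genuine subtlety to be this interplay of the two $m_0/m$ factors---the $\pi=m_0/m$ appearing in the oracle procedure and the $\pi_0$ appearing in the \citet{blanchard_two_2008} FDR bound---together with confirming that the $\max\{t:\,\cdot\,\}$ formulation in (\ref{eq:step-up}) coincides with the ordered-$p$-value step-up formulation used there, so that the critical-value identification is exact. That verification is routine but must be made explicit; once it is in place, the remainder is a direct citation.
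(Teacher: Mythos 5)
Your proposal is correct and follows exactly the paper's route: the paper proves this corollary simply by citing Proposition 2.7 and Lemma 3.2(iii) of \citet{blanchard_two_2008} as verifying the hypothesis of the preceding oracle corollary, which is precisely what you do. Your write-up additionally makes explicit the level-rescaling bookkeeping (the oracle procedure with $\pi=m_0/m$ is the Blanchard--Roquain procedure at level $qm/m_0$, whose FDR bound $\pi_0\cdot qm/m_0=q$ cancels exactly) and the choice $\nu(\{k\})\propto 1/k$ for the \citet{benjamini_control_2001} case, both of which the paper leaves implicit.
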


Key requirements here are that the original procedure (a) controls FDR under
arbitrary dependence and (b) can incorporate the $m/m_0$ adjustment through an
oracle result.
In particular, (b) rules out procedures of the form
$\mathcal{R}_{\hat\pi,\beta(\cdot),q}$ with $\hat\pi$ an estimate of $m_0/m$, as
in \citet{storey_direct_2002}.
Clearly, ruling out estimates of $m_0/m$ is necessary, since
such estimates attempt to use a bound $m_0\cdot t$ on $V(t)$, whereas
average coverage only gives a bound of
$m\cdot t$ on the expectation of $V(t)$ (see the Supplementary Materials for a
counterexample).
Regarding (a), note that even if the
$p$-values satisfy some dependence structure that would guarantee FDR control
under classical significance level control (e.g. independence or the positive
regression dependency on a subset condition used by
\citet{benjamini_control_2001}), FDR control is not guaranteed.
The following counterexample shows that (a) is necessary in general.
In
particular, the BH procedure need not control FDR under the average significance
level control condition (\ref{eq:average_level_alpha}) and independent
$p$-values.

Suppose $m\ge 2$ and $q<2/3$.
Let $P(p_1\le t)=t\cdot m$ for $0\le t\le (3/2)\cdot (q/m)$
and let $P(p_1\in ((3/2)\cdot (q/m),2q/m])=0$.
Let
$P(p_2\in [a,b])=(b-a)\cdot m$ for any $(3/2)\cdot (q/m)\le a\le b\le 2q/m$
and let $P(p_2\in [0,(3/2)\cdot (q/m)))=0$.
We
can then distribute the remaining probability mass of $p_1,p_2$ and
$p_3,\ldots,p_m$ over the set $(2q/m,1]$ so that $\frac{1}{m}\sum_{i=1}^m
P(p_i\le t)\le t$ for all $t\in [0,1]$ (for example, we can set $p_3,\ldots,p_m$
to be equal to $1$ with probability one, and we can set the remaining
probability mass for $p_1$ and $p_2$ to point masses at $1$).
Thus, the average
significance level condition
(\ref{eq:average_level_alpha}) holds, with $\mathcal{H}_0=\{1,\ldots,m\}$.  Now
consider the FDR of the Benjamini-Hochberg procedure, which rejects all
hypotheses $i$ such that $p_i\le q\hat r/m$ where $\hat r$ is the number of
rejected hypotheses.  The FDR is equal to the probability of at least one rejection
in this case (since $\mathcal{H}_0=\{1,\ldots,m\}$).  Note that the event
$p_1\le q/m$ implies that hypothesis $1$ is rejected, and this has probability
$q$.  But the event $q/m< p_1\le (3/2)\cdot (q/m)$ and $p_2\le 2q/m$ has
probability $(q/2)\cdot (q/2)$, and it is disjoint
with the event $p_1\le q/m$.  This gives a lower bound of $q+[(q/m)\cdot(1/2)]^2>q$ for
the FDR.  Thus, the FDR is not controlled at level $q$.

It is worth mentioning here some further results that have been obtained
subsequent to the first draft of this paper.
First,
\citet{ignatiadis_asymptotic_2025} have developed results for $e$-values using a
related notion of average error control: they show that if the $e$-values
$e_1,\ldots,e_m$ satisfy $(1/m)\sum_{i\in\mathcal{H}_0}E[e_i]\le 1$, then the
$e$-BH procedure of \citet{wang_false_2022} controls FDR at the nominal level.
Using this result and a certain calibration of $p$-values to $e$-values, they
provide an alternative proof of the
main conclusion of Corollary \ref{cor:by_br_fdr_control}.
Second,
\citet{barber_false_2025} have obtained several results characterizing the FDR
properties of the BH procedure under (\ref{eq:average_level_alpha}) and various
dependence assumptions on the $p$-values.  In particular, they show that, when
$p$-values are independent, applying the procedure %
(\ref{eq:step-up}) with $\pi=1$ and $\beta(r)=r/1.93$ (i.e. the BH procedure at
nominal level $q/1.93$) controls FDR.  This provides a much less conservative
procedure compared to those in Corollary \ref{cor:by_br_fdr_control} in the case
of independent $p$-values.

\subsection{Estimation of FDR for Fixed Rejection Region}\label{sec:fdr_estimation}

We now consider using the BH cutoff as an estimate of the FDR for a fixed
rejection region multiple testing procedure.
Under independent $p$-values, it is known that $\widehat{\FDR}(t)$ is an
upwardly biased estimate of $\FDR(\mathcal{R}^{\operatorname{fixed}}_t)$ under the
classical significance level control condition (\ref{eq:level_alpha})
(see \citet{storey_strong_2004} and the correction by
\citet{liang_adaptive_2012}).
Indeed, a slightly weaker assumption of independence between null and non-null
$p$-values suffices.\footnote{I thank an anonymous referee for pointing this out.}
We now show that this property continues to hold
under the weaker average significance level control condition
(\ref{eq:average_level_alpha}).
The result essentially follows from the same
arguments as in the case where the $p$-values satisfy the classical
significance level control condition.

\begin{theorem}
  Suppose that $(P,\mathcal{H}_0)$ satisfies the average significance level control condition
  (\ref{eq:average_level_alpha}) and that the null $p$-values
  $\{p_i\}_{i\in\mathcal{H}_0}$ are statistically independent of the non-null
  $p$-values $\{p_i\}_{i\notin\mathcal{H}_0}$.  Then
  $E_P\widehat{\FDR}(t)\ge \FDR(\mathcal{R}^{\operatorname{fixed}}_t,\mathcal{H}_0,P)$.
\end{theorem}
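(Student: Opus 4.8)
The plan is to follow the leave-one-out argument of \citet{storey_strong_2004}, changing only the final step, where the average significance level condition (\ref{eq:average_level_alpha}) replaces the pointwise bound $P(p_i\le t)\le t$. Recalling that $\FDR(\mathcal{R}^{\operatorname{fixed}}_t,\mathcal{H}_0,P)=E_P[V(t)/(R(t)\vee 1)]$ and that $\widehat{\FDR}(t)=mt/(R(t)\vee 1)$, I would first write the FDR as a sum over the true nulls,
\begin{align*}
  \FDR(\mathcal{R}^{\operatorname{fixed}}_t,\mathcal{H}_0,P)
  =\sum_{i\in\mathcal{H}_0} E_P\left[\frac{I(p_i\le t)}{R(t)\vee 1}\right],
\end{align*}
and treat each summand separately.

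For a fixed $i\in\mathcal{H}_0$, let $R_{-i}(t)=\sum_{j\ne i} I(p_j\le t)$ be the leave-one-out rejection count, so that $R(t)=R_{-i}(t)+I(p_i\le t)$. On the event $\{p_i\le t\}$ we have $R(t)\ge 1$, hence $R(t)\vee 1=R(t)=R_{-i}(t)+1$, which gives the identity $I(p_i\le t)/(R(t)\vee 1)=I(p_i\le t)/(R_{-i}(t)+1)$. Using independence of the $p$-values (as in property (ii)), $p_i$ is independent of $R_{-i}(t)$, so the summand factors as
\begin{align*}
  E_P\left[\frac{I(p_i\le t)}{R_{-i}(t)+1}\right]
  =P(p_i\le t)\cdot E_P\left[\frac{1}{R_{-i}(t)+1}\right].
\end{align*}
Since $R(t)\le R_{-i}(t)+1$ and $1\le R_{-i}(t)+1$ hold for every realization, we have $R(t)\vee 1\le R_{-i}(t)+1$ pointwise, and therefore $E_P[1/(R_{-i}(t)+1)]\le E_P[1/(R(t)\vee 1)]$.

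Collecting these bounds, summing over $i\in\mathcal{H}_0$, and noting that $E_P[1/(R(t)\vee 1)]$ does not depend on $i$, I obtain
\begin{align*}
  \FDR(\mathcal{R}^{\operatorname{fixed}}_t,\mathcal{H}_0,P)
  \le\left(\sum_{i\in\mathcal{H}_0} P(p_i\le t)\right) E_P\left[\frac{1}{R(t)\vee 1}\right].
\end{align*}
The average significance level control condition (\ref{eq:average_level_alpha}) bounds $\sum_{i\in\mathcal{H}_0}P(p_i\le t)$ by $mt$, which yields $\FDR(\mathcal{R}^{\operatorname{fixed}}_t,\mathcal{H}_0,P)\le mt\cdot E_P[1/(R(t)\vee 1)]=E_P\widehat{\FDR}(t)$, as desired. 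The only substantive departure from the classical argument is this last substitution: whereas the classical condition (\ref{eq:level_alpha}) would bound the sum by $m_0 t$, the average condition gives the weaker bound $mt$, which is exactly what the factor $mt$ in the definition of $\widehat{\FDR}(t)$ accommodates. I expect no real obstacle beyond the independence factoring, which is the one place the argument genuinely relies on independence of the $p$-values.
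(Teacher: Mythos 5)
Your proof is correct, and it reaches the conclusion by a genuinely different route than the paper. The paper argues in aggregate: it writes $E_P\widehat{\FDR}(t)-\FDR(\mathcal{R}^{\operatorname{fixed}}_t,\mathcal{H}_0,P)=E_P\frac{mt-V(t)}{[V(t)+S(t)]\vee 1}$, lower-bounds this by replacing $V(t)$ with $mt$ in the denominator (the substitution enlarges the denominator exactly when the numerator is positive and shrinks it when the numerator is negative, so the inequality goes the right way in both cases), and then finishes by conditioning on $S(t)$, using independence of $V(t)$ and $S(t)$ together with $E_PV(t)\le mt$. You instead run the classical leave-one-out argument hypothesis by hypothesis: factor $E_P\left[I(p_i\le t)/(R_{-i}(t)+1)\right]$ using independence of $p_i$ from the other $p$-values, then bound $1/(R_{-i}(t)+1)\le 1/(R(t)\vee 1)$ pointwise and sum over $i\in\mathcal{H}_0$. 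Both proofs consume the average significance level condition in exactly the same form, $\sum_{i\in\mathcal{H}_0}P(p_i\le t)=E_PV(t)\le mt$, and both rely on an independence assumption that the theorem statement leaves implicit (the paper's proof appeals to ``the independence assumption on $p$-values''; you flag the same reliance). The substantive difference is how much independence each needs. The paper's aggregate argument requires only that $V(t)$ and $S(t)$ be independent---the null $p$-values as a block independent of the non-null ones---so it tolerates arbitrary dependence within the nulls and within the non-nulls. Your factoring step requires each null $p_i$ to be independent of the entire vector of remaining $p$-values, which in particular forces the nulls to be mutually independent (though it still allows arbitrary dependence among the non-nulls), a strictly stronger condition. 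Under the intended hypothesis of mutually independent $p$-values both arguments go through; the paper's buys extra robustness to within-group dependence, while yours is the more standard textbook argument and isolates cleanly the single place where the weaker average-level bound $mt$ replaces the classical bound $m_0t$.
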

\begin{proof}
  For $V(t)$ and $S(t)$ defined in (\ref{eq:VtStRt}), we have
\begin{align*}
  E_P\widehat{\FDR}(t) - \FDR(\mathcal{R}^{\operatorname{fixed}}_t,\mathcal{H}_0,P)
  =E_P\frac{m\cdot t-V(t)}{[V(t)+S(t)]\vee 1}
  \ge E_P\frac{m\cdot t-V(t)}{[m\cdot t+S(t)]\vee 1}
\end{align*}
(the last step follows by noting that replacing $V(t)$ with $m\cdot t$ in the denominator weakly decreases
the denominator when the numerator is negative and weakly increases the
denominator when the numerator is positive).
The result then follows by noting that $S(t)$ and $V(t)$ are independent by the
independence assumption on $p$-values, and that $E_PV(t)\le m\cdot t$ by the
assumption that the $p$-values satisfy the average significance level control
condition (\ref{eq:average_level_alpha}).
\end{proof}

\section{Asymptotic Results}\label{sec:asymptotic_fdr}

We now consider asymptotic FDR control, under a sequence $P=P^{(m)}$ of
probability measures and $\mathcal{H}_0=\mathcal{H}_0^{(m)}$ and $m\to\infty$.
We suppress the dependence on $m$ whenever it doesn't cause confusion, but we
note that the $p$-values form a triangular array, since the distribution (and
the set $\mathcal{H}_0$ of true null hypotheses) can change with $m$.
Recall the definitions of $V(t)$, $S(t)$
and $R(t)$ in (\ref{eq:VtStRt}).
If the average significance level control condition
(\ref{eq:average_level_alpha}) holds, and the $p$-values do not exhibit too
much statistical dependence,
we will have
\begin{align}
  \label{eq:asymptotic_average_level}
  \frac{1}{m}V(t)
  \le t+o_P(1)
  \text{ for all }t\in [0,1].
\end{align}
For some results, we also assume a law
of large numbers for the total rejections and rejected true nulls:
\begin{align}
  \label{eq:total_rejections_lln}
  \frac{1}{m}V(t)
  \stackrel{p}{\to} G(t)\le t
  \quad\text{and}\quad
  \frac{1}{m}R(t)
  \stackrel{p}{\to} F(t)
  \text{ for all }t\in [0,1].
\end{align}
These assumptions are analogous to assumptions made for asymptotic FDR control
under classical significance level control in the literature \citep[e.g.][Eq.
(7)-(9)]{storey_strong_2004}.  The difference here is that the conditions are
weaker, since the upper bound in (\ref{eq:asymptotic_average_level}) is given by
$t$ rather than $t\pi_0$ where $\pi_0$ is the limit of $\#\mathcal{H}_0/m$.
As one might expect, this will lead to problems for ``adaptive'' procedures that
attempt to estimate $\pi_0$ (see the Supplementary Materials for a
counterexample).
However, as we now show, it is not a
problem for the Benjamini-Hochberg procedure, which uses the conservative upper
bound of $1$.
We first show conservative consistency of the BH cutoff (\ref{eq:fdr_hat}) for
the FDR (and FDP) of the fixed rejection region procedure.

\begin{theorem}\label{thm:conservative_consistency}
  Let $\widehat{\FDR}(t)$ be the BH estimate, given in (\ref{eq:fdr_hat}), of
  the FDR of the fixed rejection region procedure $\mathcal{R}^{\text{fixed}}_t$
  given in (\ref{eq:fixed_t_R}) and suppose that
  (\ref{eq:asymptotic_average_level}) holds.  Let $\underline t$ be such that
  there exists $\eta>0$ with $\frac{1}{m}\sum_{i=1}^m I(p_i\le \underline t)\ge
  \eta +  o_P(1)$.
  Then
  \begin{align*}
    \inf_{t\in [\underline t, 1]}\left[ \widehat{\FDR}(t)-\FDP(\mathcal{R}^{\text{fixed}}_t,\mathcal{H}_0) \right]
    \ge o_P(1).
  \end{align*}
  If, in addition, (\ref{eq:total_rejections_lln}) holds for continuous
  functions $G$ and $F$, then, letting $\FDR_\infty(t)=G(t)/F(t)$, we have
  \begin{align*}
    &\sup_{t\in [\underline t, 1]} \left| \FDP(\mathcal{R}^{\text{fixed}}_t,\mathcal{H}_0) - \FDR_\infty(t) \right|\stackrel{p}{\to } 0,
    \quad
      \sup_{t\in [\underline t, 1]} \left| \FDR(\mathcal{R}^{\text{fixed}}_t,\mathcal{H}_0,P) - \FDR_\infty(t) \right|\to 0  \\
    &\text{ and }
    \inf_{t\in [\underline t, 1]}\left[ \widehat{\FDR}(t)-\FDR(\mathcal{R}^{\text{fixed}}_t,\mathcal{H}_0,P) \right]
     \ge o_P(1).
  \end{align*}
\end{theorem}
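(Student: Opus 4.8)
The plan is to work throughout with the algebraic identity
$$\widehat{\FDR}(t)-\FDP(\mathcal{R}^{\text{fixed}}_t,\mathcal{H}_0)=\frac{m\cdot t-V(t)}{R(t)\vee 1},$$
which reduces every display to controlling the numerator $mt-V(t)$ and the denominator $R(t)\vee 1$ uniformly over $t\in[\underline t,1]$. For the first claim I would split the range of $t$ by the sign of the numerator. Where $mt-V(t)\ge 0$ the ratio is nonnegative and causes no difficulty. Where $mt-V(t)<0$, monotonicity of $R$ gives $R(t)\vee 1\ge R(\underline t)\vee 1$ for $t\ge\underline t$, and since the numerator is negative, enlarging the denominator only raises the ratio, so it is bounded below by $[mt-V(t)]/[R(\underline t)\vee 1]$. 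Combining the two cases yields
$$\inf_{t\in[\underline t,1]}\frac{mt-V(t)}{R(t)\vee 1}\ge\min\!\left\{0,\;\frac{\inf_{t\in[\underline t,1]}(mt-V(t))}{R(\underline t)\vee 1}\right\}.$$

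The crux of the first claim is then to upgrade the pointwise bound (\ref{eq:asymptotic_average_level}) to the uniform one-sided statement $\sup_{t\in[0,1]}[\tfrac1m V(t)-t]\le o_P(1)$, equivalently $\inf_t(mt-V(t))\ge -m\cdot o_P(1)$. This follows from monotonicity of $t\mapsto V(t)$ by a standard grid argument: fixing a mesh $\epsilon>0$ and a finite grid, one bounds $\tfrac1m V(t)-t$ on each subinterval by its value at the right endpoint plus $\epsilon$ and applies (\ref{eq:asymptotic_average_level}) at the finitely many grid points. Dividing by $R(\underline t)\vee 1$ and using $\tfrac1m R(\underline t)\ge\eta+o_P(1)$ (so that $[\tfrac1m R(\underline t)]^{-1}\le 2/\eta$ with probability tending to one) converts the $-m\cdot o_P(1)$ numerator into an $o_P(1)$ lower bound, giving the first display.

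For the second set of claims I would first pass from the pointwise limits in (\ref{eq:total_rejections_lln}) to uniform ones, $\sup_t|\tfrac1m V(t)-G(t)|\stackrel{p}{\to}0$ and $\sup_t|\tfrac1m R(t)-F(t)|\stackrel{p}{\to}0$, again using monotonicity of $V$ and $R$ together with continuity of $G$ and $F$ (a P\'olya-type uniform-convergence argument). Since $\tfrac1m R(\underline t)\to F(\underline t)$ and $\tfrac1m R(\underline t)\ge\eta+o_P(1)$, the limit $F$ is bounded away from zero on $[\underline t,1]$, so the $\vee\,1$ in the denominator is asymptotically inactive and $\FDP(\mathcal{R}^{\text{fixed}}_t,\mathcal{H}_0)=[\tfrac1m V(t)]/[\tfrac1m R(t)\vee\tfrac1m]$ converges uniformly to $\FDR_\infty(t)=G(t)/F(t)$, the first statement. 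The deterministic claim for $\FDR$ then follows by bounded convergence: $\FDP$ is bounded by one, so $E_P[\sup_t|\FDP-\FDR_\infty|]\to 0$, whence $\sup_t|\FDR(\mathcal{R}^{\text{fixed}}_t,\mathcal{H}_0,P)-\FDR_\infty(t)|\le E_P[\sup_t|\FDP-\FDR_\infty|]\to 0$. Finally, combining these with the first claim and the triangle inequality gives $\inf_t[\widehat{\FDR}(t)-\FDR(\mathcal{R}^{\text{fixed}}_t,\mathcal{H}_0,P)]\ge\inf_t[\widehat{\FDR}(t)-\FDP]-\sup_t|\FDP-\FDR|\ge o_P(1)$.

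I expect the main obstacle to be the two uniformization steps: upgrading the one-sided pointwise bound (\ref{eq:asymptotic_average_level}) and the pointwise limits (\ref{eq:total_rejections_lln}) to hold uniformly in $t$. Both rest on monotonicity of $V$ and $R$; the one-sided version needs only a grid argument, whereas the two-sided version needs continuity of $G$ and $F$ to prevent the limit from jumping between grid points. Everything else — the sign split, the denominator lower bound supplied by the $\eta$ condition, and the bounded-convergence passage from $\FDP$ to $\FDR$ — is routine once uniformity is in hand.
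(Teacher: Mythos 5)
Your proposal is correct and follows essentially the same route as the paper's proof: both hinge on upgrading (\ref{eq:asymptotic_average_level}) and (\ref{eq:total_rejections_lln}) to uniform statements via monotonicity of $V$ and $R$ plus a grid argument (with continuity of $G$, $F$ for the two-sided version), then using $R(\underline t)/m\ge \eta/2$ with probability tending to one to control the denominator, and finally dominated/bounded convergence to pass from $\FDP$ to $\FDR$. Your sign-split algebraic bound is just a rephrasing of the paper's $\varepsilon$-contrapositive argument for the first display, so there is no substantive difference.
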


The proof of Theorem \ref{thm:conservative_consistency} is given in the
Supplementary Materials.
Next, we state a result showing asymptotic control of FDR for the BH procedure
$\mathcal{R}_{\operatorname{BH},q}$ defined in (\ref{eq:bhq}).

\begin{theorem}\label{thm:asymptotic_fdr_control}
  Suppose Assumptions (\ref{eq:asymptotic_average_level}) and
  (\ref{eq:total_rejections_lln}) hold for continuous functions $G$ and $F$ and
  that there exists $t^*>0$ such that $F(t^*)>0$ and $G(t^*)/F(t^*)<q$.  Then
  \begin{align*}
    \FDP(\mathcal{R}_{\operatorname{BH},q},\mathcal{H}_0)\le q + o_P(1)
    \quad\text{and}\quad
    \FDR(\mathcal{R}_{\operatorname{BH},q},\mathcal{H}_0,P)\le q + o(1).
  \end{align*}
\end{theorem}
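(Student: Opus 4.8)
The plan is to use that $\mathcal{R}_{\operatorname{BH},q}$ is nothing but the fixed rejection region procedure $\mathcal{R}^{\operatorname{fixed}}_{\hat t}$ at the data-dependent cutoff $\hat t=\hat t_{\operatorname{BH},q}$, which by (\ref{eq:bhq}) satisfies $\widehat{\FDR}(\hat t)=m\hat t/(R(\hat t)\vee 1)\le q$, and then to import the conservative consistency of $\widehat{\FDR}$ from Theorem \ref{thm:conservative_consistency}. Concretely, I would first write
\[
  \FDP(\mathcal{R}_{\operatorname{BH},q},\mathcal{H}_0)=\frac{V(\hat t)}{R(\hat t)\vee 1}=\widehat{\FDR}(\hat t)\cdot\frac{V(\hat t)}{m\hat t}\le q\cdot\frac{V(\hat t)}{m\hat t},
\]
so that it suffices to show $V(\hat t)/(m\hat t)\le 1+o_P(1)$. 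The uniform size bound (\ref{eq:uniform_t_asymptotic_size}) gives $V(t)/m\le t+\delta_m$ for all $t\in[0,1]$ with $\delta_m=o_P(1)$, whence $V(\hat t)/(m\hat t)\le 1+\delta_m/\hat t$; everything therefore reduces to bounding $\hat t$ away from zero.

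The key step, and the one I expect to be the main obstacle, is to show that $\hat t\ge\underline t$ for some fixed $\underline t>0$ with probability approaching one. This is where the non-triviality hypothesis $F(t^*)>0$, $G(t^*)/F(t^*)<q$ must be used. By (\ref{eq:total_rejections_lln}), $R(t^*)/m\stackrel{p}{\to}F(t^*)>0$, so $\widehat{\FDR}(t^*)\stackrel{p}{\to}t^*/F(t^*)$, and the goal is to conclude that the event $\widehat{\FDR}(t^*)\le q$ holds with probability approaching one, forcing $\hat t\ge t^*$. The difficulty is that the condition constrains the true limiting FDR $G(t^*)/F(t^*)$ rather than the conservative estimate $t^*/F(t^*)=\lim\widehat{\FDR}(t^*)$, and since $G(t^*)\le t^*$ these two need not agree; the argument must therefore either promote the crossing to the estimate scale or rule out directly that the cutoff collapses into the small-$t$ regime, where $V$ and $R$ are no longer governed by the laws of large numbers (\ref{eq:total_rejections_lln}) and FDR can be inflated exactly as in the finite-sample counterexample of Section \ref{sec:finite_sample_fdr}. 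Once a valid $\underline t>0$ is secured, I would set $\eta=F(\underline t)/2$ so that the hypothesis $\frac{1}{m}\sum_{i=1}^mI(p_i\le\underline t)\ge\eta+o_P(1)$ of Theorem \ref{thm:conservative_consistency} holds, since $\frac{1}{m}\sum_{i=1}^mI(p_i\le\underline t)=R(\underline t)/m\stackrel{p}{\to}F(\underline t)=2\eta$.

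On the event $\hat t\ge\underline t$, which then has probability approaching one, the reduction above gives $\delta_m/\hat t\le\delta_m/\underline t=o_P(1)$ and hence $\FDP(\mathcal{R}_{\operatorname{BH},q},\mathcal{H}_0)\le q+o_P(1)$; equivalently, one may invoke the first display of Theorem \ref{thm:conservative_consistency} to write $\FDP(\mathcal{R}^{\operatorname{fixed}}_{\hat t},\mathcal{H}_0)\le\widehat{\FDR}(\hat t)+\sup_{t\in[\underline t,1]}[\FDP(\mathcal{R}^{\operatorname{fixed}}_t,\mathcal{H}_0)-\widehat{\FDR}(t)]\le q+o_P(1)$. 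This proves the first assertion. For the second, I would note that the FDP is bounded by $1$ and that the bound just obtained means $\FDP(\mathcal{R}_{\operatorname{BH},q},\mathcal{H}_0)-q$ is at most $o_P(1)$ and at most $1$; taking expectations and applying the dominated convergence theorem then yields $\FDR(\mathcal{R}_{\operatorname{BH},q},\mathcal{H}_0,P)=E_P\FDP\le q+o(1)$, completing the argument.
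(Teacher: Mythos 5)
You take the same route as the paper: view $\mathcal{R}_{\operatorname{BH},q}$ as the fixed rejection region procedure evaluated at the random cutoff $\hat t=\hat t_{\operatorname{BH},q}$, use the defining property $\widehat{\FDR}(\hat t_{\operatorname{BH},q})\le q$, reduce everything to producing a fixed $\underline t>0$ with $P(\hat t_{\operatorname{BH},q}\ge\underline t)\to 1$, then conclude via Theorem \ref{thm:conservative_consistency} and dominated convergence. Every step you actually carry out is correct, including the passage from the FDP statement to the FDR statement. But the proposal is not a complete proof: the step you yourself call the main obstacle --- securing $\underline t>0$ with $P(\hat t_{\operatorname{BH},q}\ge\underline t)\to 1$ --- is never established. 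You describe what it would take and then proceed conditionally on its being done, so what you have is a correct reduction of the theorem to that claim, nothing more.

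The reason this deserves more than an ``incomplete'' verdict is that the obstacle you flag is genuine, and the paper's own proof stumbles exactly there. The paper closes this step by asserting $\widehat{\FDR}(t^*)\stackrel{p}{\to}G(t^*)/F(t^*)<q$; but $\widehat{\FDR}(t^*)=mt^*/(R(t^*)\vee 1)$, so by (\ref{eq:total_rejections_lln}) its limit is $t^*/F(t^*)$, which dominates $G(t^*)/F(t^*)$ since $G(t^*)\le t^*$, and the stated hypothesis does not keep it below $q$. In fact the theorem is false as stated. Take $q<2/3$, the two adversarial nulls of Section \ref{sec:finite_sample_fdr} (mutually independent and independent of everything else), the remaining nulls with $p$-values uniform on $[1-\epsilon,1]$ with $\epsilon<1-q$, and a fraction $\rho\in(0,1)$ of alternatives with $p$-values uniform on $[1/4,1]$, all independent. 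Then $G(t)=0$ and $F(t)=\frac{4}{3}\rho(t-1/4)^{+}$ for $t\le 1-\epsilon$, so the hypothesis holds at $t^*=1/2$ (indeed $F(t^*)=\rho/3>0$ and $G(t^*)/F(t^*)=0<q$), while $F(t)<t/q$ for every $t\in(0,1]$ and no $p$-value other than the two adversarial ones can fall below $1/4$. Consequently $P(\hat t_{\operatorname{BH},q}\le 2q/m)\to 1$, with probability tending to one only adversarial nulls can be rejected, and the FDR converges to $q+q^2/4>q$; both conclusions of the theorem fail, so under the stated hypothesis neither your argument nor the paper's can be completed. The repair is to strengthen the hypothesis to $t^*/F(t^*)<q$, a condition on the limit of the estimator rather than of the true FDP, as in the classical asymptotic results of \citet{storey_strong_2004}; with it your plan closes immediately: $\widehat{\FDR}(t^*)\stackrel{p}{\to}t^*/F(t^*)<q$ forces $P(\hat t_{\operatorname{BH},q}\ge t^*)\to 1$, and you may take $\underline t=t^*$ and $\eta=F(t^*)/2$.
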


The proof of Theorem \ref{thm:asymptotic_fdr_control} is given in the
Supplementary Materials.
The condition on $t^*$ used in Theorem \ref{thm:asymptotic_fdr_control} imposes
a lower bound on the proportion of total rejections relative to null rejections
at nominal level $t^*$.  This condition requires that
the hypothesis tests have sufficient power on average, and that the proportion
$(m-\#\mathcal{H}_0)/m$ of non-null hypotheses is not too small
as $m\to\infty$.
Similar conditions have been used in the asymptotic analysis of multiple
hypothesis testing procedures under classical significance level control
\citep[e.g.][Theorem 4]{storey_strong_2004}.

\bibliography{../../../../../library.bib}

\newpage

\appendix
\renewcommand{\theequation}{S\arabic{equation}}
\renewcommand{\thelemma}{S\arabic{lemma}}

\section{Supplementary Material}

This supplementary material contains proofs of Theorems 3 and 4 from the main
text, as well as a counterexample showing that procedures that use an estimate
of the proportion of true null hypotheses may not control FDR when applied to
average significance level controlling $p$-values.

\subsection{Proof of Theorem \ref{thm:conservative_consistency}}

We begin with a lemma regarding uniform convergence.

\begin{lemma}\label{lemma:uniform_convergence}
  If (\ref{eq:asymptotic_average_level}) holds, then
  \begin{align}
  \label{eq:uniform_t_asymptotic_size}
  \inf_{t\in [0,1]}\left[ t - V(t)/m \right]\ge o_P(1).
  \end{align}
  If (\ref{eq:total_rejections_lln}) holds for continuous functions $F$ and $G$,
  then
  \begin{align}
    \label{eq:uniform_t_lln}
    \sup_{t\in [0,1]}\left| V(t)/m - G(t)  \right|\stackrel{p}{\to } 0
    \quad\text{and}\quad
    \sup_{t\in [0,1]}\left| R(t)/m - F(t)  \right|\stackrel{p}{\to } 0.
  \end{align}
\end{lemma}
\begin{proof}
The result follows by arguing as in the proof of the Glivenko-Cantelli Theorem.
Let
$\overline t_K(t)$ be the least element in $\{0,1/K,\ldots,(K-1)/K,1\}$ that is greater
than or equal to $t$.
To show (\ref{eq:uniform_t_asymptotic_size}), note that
$t-V(t)/m\ge t-V(\overline t_K(t))/m\ge \overline t_K(t)-V(\overline t_K(t)/m)-1/K$.
Thus,
$\inf_{t\in [0,1]}\left[ t - V(t)/m \right]\ge \min_{t\in
  \{0,1/K,\ldots,(K-1)/K,1\}}\left[ t - V(t)/m \right]-1/K$.  Applying
(\ref{eq:asymptotic_average_level}) and taking $K\to\infty$ gives
(\ref{eq:uniform_t_asymptotic_size}).

Similarly, to show that (\ref{eq:uniform_t_lln}) holds so long as
(\ref{eq:total_rejections_lln}) holds for continuous $G$ and $F$, note that
  $V(t)/m - G(t) \le V(\overline t_K(t))/m - G(t)
  \le V(\overline t_K(t))/m - G(\overline t_K(t))
  +\omega(1/K)$
where $\omega(\varepsilon)=\sup_{s,t\in [0,1], |s-t|\le \varepsilon}
|G(s)-G(t)|$ satisfies $\lim_{\varepsilon\to 0}\omega(\varepsilon)=0$ by
uniform continuity of $G$ on the interval $[0,1]$.  From this and an analogous
lower bound, it follows that
$\sup_{t\in [0,1]} |V(t)/m-G(t)|\le
\max_{t\in \{0,1/K,\ldots,(K-1)/K,1\}}|V(t)/m-G(t)| + \omega(1/K)$.  Applying
(\ref{eq:total_rejections_lln}) and taking $K\to 0$ gives the first part of
(\ref{eq:uniform_t_lln}).  The same arguments applied to $R(t)/m-F(t)$ give the
second part of (\ref{eq:uniform_t_lln}).

\end{proof}

We now prove Theorem \ref{thm:conservative_consistency}.
We begin by showing that the first claim in the theorem holds under
(\ref{eq:asymptotic_average_level}).
For any $\varepsilon>0$,
the event
$\inf_{t\in [\underline t, 1]}\left[\widehat{\FDR}(t)-\FDP(\mathcal{R}^{\text{fixed}}_t,\mathcal{H}_0)\right]<-\varepsilon$
implies that there exists $t\in [\underline t, 1]$ such that $t-V(t)/m<-\varepsilon(R(t)\vee 1)/m\le -\varepsilon (R(\underline t)\vee 1)/m$.  This
implies $\inf_{t\in [\underline t, 1]}[t-V(t)/m]\le -\varepsilon\cdot\eta/2$
on the event $R(\underline t)/m=\frac{1}{m}\sum_{i=1}^mI(p_i\le \underline
t)\ge\eta/2$, which holds
with probability approaching one by assumption.
The first claim of Theorem \ref{thm:conservative_consistency} now follows by
noting that the probability of this event converges to zero by (\ref{eq:uniform_t_asymptotic_size}).

Next, we show that the second claim of the theorem holds if
(\ref{eq:asymptotic_average_level}) holds and 
(\ref{eq:total_rejections_lln}) holds for some continuous $F$ and $G$.
The first part of the second claim follows immediately from
(\ref{eq:uniform_t_lln}), using uniform continuity of the function $(a,b)\mapsto
a/b$ over $b\in [F(\underline t),1]$ since $F(\underline t)\ge \eta>0$.
The second part of the second claim then follows immediately from the dominated
convergence theorem.
The third part of the second claim follows from combining the first claim in the theorem
with the first and second part of the second claim in the theorem.

\subsection{Proof of Theorem \ref{thm:asymptotic_fdr_control}}

We have
\begin{align*}
    \FDP(\mathcal{R}_{\operatorname{BH},q},\mathcal{H}_0)
    = \frac{V(\hat t_{\operatorname{BH},q})/m}{[R(\hat t_{\operatorname{BH},q})\vee 1]/m}
    \le \widehat{\FDR}(\hat t_{\operatorname{BH},q}) + I(\hat t_{\operatorname{BH},q}< t^*) + o_P(1)
\end{align*}
  using the fact that $\sup_{t\in [t^*,1]}\left[ \frac{V(t)/m}{[R(t)\vee 1]/m} -
    \widehat{\FDR}(t) \right]\le o_P(1)$ by Theorem \ref{thm:conservative_consistency}.
  Since $\widehat{\FDR}(\hat t_{\operatorname{BH},q})\le q$ by
  construction, it suffices to show $P(\hat t_{\operatorname{BH},q}\ge
  t^*)\to 1$.  But this follows since $\widehat{\FDR}(t^*)\le q$ implies
  $\hat t_{\operatorname{BH},q}\ge t^*$, and
  $\widehat{\FDR}(t^*)\stackrel{p}{\to} G(t^*)/F(t^*)<q$ by
  (\ref{eq:total_rejections_lln}).
  This shows that $\FDP(\mathcal{R}_{\operatorname{BH},q},\mathcal{H}_0)\le
  q+o_P(1)$, from which it also follows that $\FDR(\mathcal{R}_{\operatorname{BH},q},\mathcal{H}_0,P)\le
  q+o(1)$ by dominated convergence.

\subsection{Counterexample for adaptive procedures}

As noted in the main text, procedures that attempt to incorporate
estimates $\hat\pi$ of $\#\mathcal{H}_0/m$ will in not, in general, lead finite
sample or asymptotic FDR control under weak dependence conditions on the
$p$-values.
As a counterexample, consider the procedure
  $\mathcal{R}_{\hat\pi,\beta(\cdot),q}$ with $\beta(t)=t$ that uses the estimate
$\hat\pi=(\sum_{i=1}^m I(p_i>\lambda)+1)/((1-\lambda) m)$ of $\pi_0$ for some
$\lambda\in(0,1)$ as in \citet{storey_strong_2004} and suppose that
$\#\mathcal{H}_0/m\to \pi_0$ where $q<\pi_0<\lambda$.
Suppose that $p_i$ is
independent over $i$ and
follows a uniform distribution on $[0,\pi_0]$ for $i\in\mathcal{H}_0$, and that
$p_i$ follows some other distribution with support contained in $[0,\pi_0)$ for
$i\not\in\mathcal{H}_0$.  Then (\ref{eq:asymptotic_average_level}) and
(\ref{eq:total_rejections_lln}) hold.  However, note that
the cutoff $\hat t_{\hat\pi,\beta(\cdot),q}$ in (\ref{eq:step-up}) will be equal
to $1$ so long as $q\ge \hat\pi m \cdot 1/R(1)=\hat \pi$, which holds for large
enough $m$ with probability one since $\hat\pi=1/((1-\lambda) m)\to 0$ with
probability one.
Since the FDP is equal to $\#\mathcal{H}_0/m$ on the event that $\hat
t_{\hat\pi,\beta(\cdot),q}=1$, this implies that the FDP is equal to
$\#\mathcal{H}_0/m\to \pi_0>q$ with probability one for large enough $m$, so
that the FDR is equal to $\#\mathcal{H}_0/m>q$ for large enough $m$.
Note that the FDP is also equal to $\#\mathcal{H}_0/m$ on the event that $\hat
t_{\hat\pi,\beta(\cdot),q}\ge \lambda$, since the $p$-values are all supported
on $[0,\lambda)$.  Thus, the same argument goes through for the modification of
the procedure in which $\hat t_{\hat\pi,\beta(\cdot),q}$ is replaced by
$\lambda$ when $\hat t_{\hat\pi,\beta(\cdot),q}>\lambda$.

Note also that the oracle version of this procedure
($\mathcal{R}_{\pi_0,\beta(\cdot),q}$ with $\beta(t)=t$ and $\pi$ set to the true
asymptotic null proportion $\pi_0$) also fails to control FDR asymptotically.
For this procedure, the cutoff $\hat t_{\pi_0,\beta(\cdot),q}$ is equal to $1$
so long as $q\ge \pi_0$, which holds by assumption.  Thus, this oracle procedure
has FDP equal to $\pi_0$ with probability one, which means that it has FDR equal
to $\pi_0$ for all $m$.

\end{document}